\newcommand{\hathat}[1]{%
\begingroup%
  \let\macc@kerna\z@%
  \let\macc@kernb\z@%
  \let\macc@nucleus\@empty%
  \hat{\mathchoice%
    {\raisebox{.2ex}{\vphantom{\ensuremath{\displaystyle #1}}}}%
    {\raisebox{.2ex}{\vphantom{\ensuremath{\textstyle #1}}}}%
    {\raisebox{.16ex}{\vphantom{\ensuremath{\scriptstyle #1}}}}%
    {\raisebox{.14ex}{\vphantom{\ensuremath{\scriptscriptstyle #1}}}}%
    \smash{\hat{#1}}}%
\endgroup%
}
\newtheorem{axiom}{Axiom}
\newtheorem{remark}{Remark}
\newtheorem{definition}{Definition}
\newtheorem{lemma}{Lemma}
\newtheorem{corollary}{Corollary}
\newtheorem{property}{Property}
\begin{document}
\title{Explicit Formula for\\Partial Information Decomposition} 
\author{\textbf{Aobo Lyu}$^\star$, \textbf{Andrew Clark}$^{\star}$, and \textbf{Netanel Raviv}$^\dagger$\\
$^\star$Department of Electrical and Systems Engineering, Washington University in St. Louis, St. Louis, MO, USA\\
		$^\dagger$Department of Computer Science and Engineering, Washington University in St. Louis, St. Louis, MO, USA\\
   \texttt{aobo.lyu@wustl.edu}, \texttt{andrewclark@wustl.edu}, \texttt{netanel.raviv@wustl.edu}}

\maketitle

\begin{abstract}
Mutual information between two random variables is a well-studied notion, whose understanding is fairly complete. Mutual information between one random variable and a pair of other random variables, however, is a far more involved notion. Specifically, Shannon's mutual information does not capture fine-grained interactions between those three variables, resulting in limited insights in complex systems. To capture these fine-grained interactions, in 2010 Williams and Beer proposed to decompose this mutual information to \textit{information atoms}, called unique, redundant, and synergistic, and proposed several operational axioms that these atoms must satisfy. In spite of numerous efforts, a general formula which satisfies these axioms has yet to be found. Inspired by Judea Pearl's do-calculus, we resolve this open problem by introducing the \textit{do-operation}, an operation over the variable system which sets a certain marginal to a desired value, which is distinct from any existing approaches. Using this operation, we provide the first explicit formula for calculating the information atoms so that Williams and Beer's axioms are satisfied, as well as additional properties from subsequent studies in the field.
\end{abstract}
\pagestyle{empty}
\section{Introduction}

Since its inception by Claude Shannon~\cite{shannon2001mathematical}, mutual information has remained a pivotal measure in information theory, which finds extensive applications across multiple other domains. 
Extending mutual information to multivariate systems has attracted significant academic interest, but no widely agreed upon generalization exists to date. For instance, the so-called \textit{interaction information}~\cite{watanabe1960information} emerged in 1960 as an equivalent notion for mutual information in multivariate systems, and yet, it provides negative values in many common systems, contradicting Shannon's viewpoint of information measures as nonnegative quantities. 


Arguably the simplest multivariate setting in which Shannon's mutual information fails to capture the full complexity of the system is that of a three variable system, with \textit{two source variables}, and \textit{one target variable}. Mutual information between the source variables and the target variable does not provide insights about \textit{how} the source variables influence the target variable. Specifically, in various points of the probability space the value of the target variable might be computable either:
\begin{enumerate}
    \item[(a)] exclusively from one source variable (but not the other);
    \item[(b)] either one of the source variables; or
    \item[(c)] both variables jointly (but not separately).
\end{enumerate}
In~2010 William and Beer~\cite{williams2010nonnegative} proposed to formalize the above fine-grained interactions in a three variable system\footnote{William and Beer formulated their notions for any number of source variables, and yet herein we focus on three variables for simplicity. Extending our methods to multiple source variables will be addressed in future version of this paper.} using an axiomatic approach they called \textit{Partial Information Decomposition} (PID). They proposed decomposing said mutual information to four constituent ingredients called information atoms, which capture the above possible interactions between the variables:
\begin{enumerate}
    \item [(a)] two \textit{unique information atoms}, one for each source variable, which capture the information each source variable implies about the target variable, that cannot be inferred from the other;
    \item [(b)] one \textit{redundant information atom}, which captures the information that can inferred about the target variable from either one of the source variables; and
    \item [(c)] one \textit{synergistic information atom}, which captures the information that can be inferred about the target variable from both source variables jointly, but not individually.
\end{enumerate}


Ref.~\cite{williams2010nonnegative} proposed a set of axioms that the above information atoms should satisfy in order to provide said insights, and follow-up works in the field identified several additional properties \cite{ince2017partial,mediano2019beyond,lyu2023system,varley2023generalized}. Yet, in spite of extensive efforts~\cite{griffith2014intersection, ince2017measuring, bertschinger2013shared, harder2013bivariate, bertschinger2014quantifying}, a comprehensive definition of information atoms which satisfies all these axioms and properties is yet to be found.

In spite of limited understanding of the information atoms, PID has already found multiple applications in various fields. As a simple example~\cite[Fig.~1]{banerjee2018computing}, one can imagine the two source variables being education level and gender, and the target variable being annual income. An exact formula for computing the information atoms would shed insightful information about the extent to which annual income is a result of education level, gender, either one, or both.

Beyond this simple example, PID has broad applications in a wide range of fields.
In brain network analysis, PID (or similar ideas) has been instrumental in measuring correlations between neurons~\cite{schneidman2003synergy} and understanding complex neuronal interactions in cognitive processes \cite{varley2023partial}. For privacy and fairness studies, the synergistic concept provides insights about data disclosure mechanisms~\cite{rassouli2019data,hamman2023demystifying}. In the field of causality, information decomposition can be used to distinguish and quantify the occurrence of causal emergence \cite{rosas2020reconciling}, and more.



In this paper, we propose an explicit PID formula that satisfies all of Williams and Beer's axioms, as well as several additional desired properties.
We do so by introducing the \textit{do-operation}, which is inspired by similar concepts in the field of causal analysis~\cite{pearl1995causal,pearl2009causality,hoel2013quantifying}. Intuitively, based on the understanding that unique information is ``ideal conditional mutual information,'' our method first adjusts the entire probability distribution by using the do-operation in order to make the target variable identical to its conditional distribution given one source variable, and then calculates the expectation of mutual information between it and the other source variable under different conditions. It is worth noting that our method is not based on any of the point-wise, localized, or optimization approaches that existing methods use.


We begin in Section~\ref{section:framework} by introducing the PID framework, and its axioms and properties. We continue in Section~\ref{Section 3} by introducing our do-operation and the definition of unique information, from which all other definitions follow, and prove that all axioms and properties are satisfied. We discuss the intrinsic meaning of our definition in Section~\ref{section: discussion}, 
and provide all proofs in the appendix.

\section{Framework, axioms, and properties}\label{section:framework}

The following notational conventions are observed throughout this article:
$X,\mathcal{X},x$ (similarly~$Y,\mathcal{Y},y$ etc.) denote a random variable, its corresponding (finite) alphabet, and an element of that alphabet, respectively. The distribution of~$X$ is denoted by $\mathcal{D}_X$, the joint distribution of~$X$ and $Y$ is denoted by~$\mathcal{D}_{X,Y}$, and the distribution of $X$ given $Y=y$ is denoted by $\mathcal{D}_{X|Y=y}$. 

For random variables~$X,Y,Z$, the quantity~$I((X,Y);Z)$ captures the amount of information that one \textit{target variable}~$Z$ shares with the \textit{source variables}~$(X,Y)$, but provides no further information regarding finer interactions between the three variables. To gain more subtle insights into the interactions between~$Z$ and~$(X,Y)$, \cite{williams2010nonnegative} proposed to further decompose~$I((X,Y);Z)$ into \textit{information atoms}. Specifically, the shared information between~$Z$ and~$(X,Y)$ should contain a \textit{redundant} information atom, two \textit{unique} information atoms, and one~\textit{synergistic} information atom (see Figure~\ref{fig:PID}).

The redundant information atom $\operatorname{Red}(X,Y\to Z)$ (also called ``shared'') represents the information which either~$X$ or~$Y$ imply about~$Z$. The unique information atom $\operatorname{Un}(X \to Z|Y)$ represents the information individually contributed to~$Z$ by~$X$, but not by~$Y$ (similarly $\operatorname{Un}(Y \to Z|X)$). The synergistic information atom $\operatorname{Syn}(X,Y\to Z)$ (also called ``complementary''), represents the information that can only be known about~$Z$ through the \textit{joint} observation of~$X$ and~$Y$, but cannot be provided by either one of them separately. Together, we must have that
\begin{align}
    I((X,Y);Z) &= \operatorname{Red}(X,Y \to Z) +\operatorname{Syn}(X,Y \to Z) \nonumber \\
    &\phantom{=}+\operatorname{Un}(X\to Z\vert Y)+ \operatorname{Un}(Y\to Z\vert X).\label{eqn:basicRelation}
\end{align}
We refer to~\eqref{eqn:basicRelation} as \textit{Partial Information Decomposition} (PID).

Moreover, since the redundant atom together with one of the unique atoms constitute all information that one source variable implies about the target variable, it must be the case that their summation equals the mutual information between the two, i.e., that
\begin{align}
    I(X;Z) &= \operatorname{Red}(X,Y \to Z) + \operatorname{Un}(X\to Z\vert Y), \mbox{ and}\nonumber\\
    I(Y;Z) &= \operatorname{Red}(X,Y \to Z) + \operatorname{Un}(Y\to Z\vert X).\label{equ:Information Atoms' relationship_2}
\end{align}

In a similar spirit, the synergistic information atom and one of the unique information atoms measure shared information between the target variable and one of the source variables, while excluding the other source variable. Therefore, the summation of these quantities should coincide with the well-known definition of conditional mutual information, i.e.,
\begin{align}
    I(Z;X|Y) &= \operatorname{Syn}(X,Y \to Z) + \operatorname{Un}(X\to Z\vert Y),\mbox{ and}\nonumber\\
    I(Z;Y|X) &= \operatorname{Syn}(X,Y \to Z) + \operatorname{Un}(Y\to Z\vert X).\label{equ:Information Atoms' relationship_3}
\end{align}

Eqs.~\eqref{equ:Information Atoms' relationship_2}, and~\eqref{equ:Information Atoms' relationship_3} are the foundation of an axiomatic approach  towards an operational definition of the information atoms. These equations form the first in a series of axioms, presented next, which were raised in previous works on the topic \cite{williams2010nonnegative,williams2011information,griffith2014quantifying}. Such axiomatic approach was also taken in the past in order to shed light on Shannon's mutual information~\cite{csiszar2008axiomatic}.

\begin{axiom}[Information atoms relationship] 
\label{def: Information Atoms' relationship}
Partial Information Decomposition~\eqref{eqn:basicRelation} satisfies~\eqref{equ:Information Atoms' relationship_2} and~\eqref{equ:Information Atoms' relationship_3}.
\end{axiom}

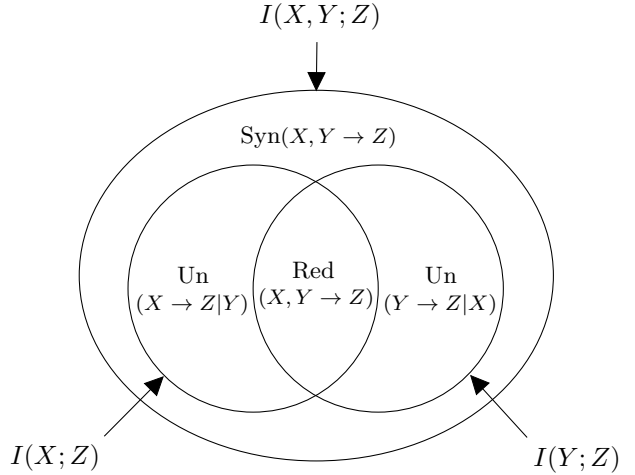
\begin{figure}[t]
\centering

\begin{tikzpicture}[x=0.75pt,y=0.75pt,yscale=-1,xscale=1]

\draw   (149.93,160.59) .. controls (149.93,108.92) and (203.46,67.04) .. (269.5,67.04) .. controls (335.54,67.04) and (389.07,108.92) .. (389.07,160.59) .. controls (389.07,212.25) and (335.54,254.13) .. (269.5,254.13) .. controls (203.46,254.13) and (149.93,212.25) .. (149.93,160.59) -- cycle ;
\draw   (174.54,167.13) .. controls (174.54,132.69) and (202.79,104.77) .. (237.64,104.77) .. controls (272.48,104.77) and (300.73,132.69) .. (300.73,167.13) .. controls (300.73,201.58) and (272.48,229.5) .. (237.64,229.5) .. controls (202.79,229.5) and (174.54,201.58) .. (174.54,167.13) -- cycle ;
\draw    (190.09,213.85) -- (162.55,241.66) ;
\draw [shift={(192.21,211.72)}, rotate = 134.73] [fill={rgb, 255:red, 0; green, 0; blue, 0 }  ][line width=0.08]  [draw opacity=0] (8.93,-4.29) -- (0,0) -- (8.93,4.29) -- cycle    ;
\draw    (349.45,213.32) -- (375.82,242.28) ;
\draw [shift={(347.43,211.1)}, rotate = 47.68] [fill={rgb, 255:red, 0; green, 0; blue, 0 }  ][line width=0.08]  [draw opacity=0] (8.93,-4.29) -- (0,0) -- (8.93,4.29) -- cycle    ;
\draw    (269.54,64.04) -- (269.82,42.72) ;
\draw [shift={(269.5,67.04)}, rotate = 270.74] [fill={rgb, 255:red, 0; green, 0; blue, 0 }  ][line width=0.08]  [draw opacity=0] (8.93,-4.29) -- (0,0) -- (8.93,4.29) -- cycle    ;
\draw   (237.64,167.13) .. controls (237.64,132.69) and (265.89,104.77) .. (300.73,104.77) .. controls (335.58,104.77) and (363.83,132.69) .. (363.83,167.13) .. controls (363.83,201.58) and (335.58,229.5) .. (300.73,229.5) .. controls (265.89,229.5) and (237.64,201.58) .. (237.64,167.13) -- cycle ;

\draw (235,152) node [anchor=north west][inner sep=0.75pt]  [font=\small] [align=left] { \ \ \ \ \ $\operatorname{Red}$\\{\footnotesize \text{ }$(X,Y\to Z)$}};
\draw (177,155) node [anchor=north west][inner sep=0.75pt]  [font=\small] [align=left] { \ \ \ \ \ $ \operatorname{Un}$\\{\footnotesize $(X\to Z|Y)$}};
\draw (230,84) node [anchor=north west][inner sep=0.75pt]  [font=\small] [align=left] {$\operatorname{Syn}${\footnotesize $(X,Y\to Z)$}};
\draw (302,155) node [anchor=north west][inner sep=0.75pt]  [font=\small] [align=left] { \ \ \ \ \ $\operatorname{Un}$\\{\footnotesize $(Y\to Z|X)$}};
\draw (239.45,22.05) node [anchor=north west][inner sep=0.75pt]  [font=\normalsize] [align=left] {$I(X,Y ; Z)$};
\draw (377.82,245.28) node [anchor=north west][inner sep=0.75pt]   [align=left] {$I(Y ; Z)$};
\draw (113.77,244.33) node [anchor=north west][inner sep=0.75pt]   [align=left] {$I(X ; Z)$};
\end{tikzpicture}
\caption{A pictorial representation of Partial Information Decomposition~\eqref{eqn:basicRelation}, where~$I((X,Y);Z)$ is decomposed to its finer information atoms, the synergistic $\operatorname{Syn}(X,Y\to Z)$ (also called ``complementary''), the redundant $\operatorname{Red}(X,Y\to Z)$ (also called ``shared''), and the two directional unique components~$\operatorname{Un}(X\to Z|Y)$ and~$\operatorname{Un}(Y\to Z|X)$. The summation of the redundant atom and one of the unique atoms must be equal to the corresponding mutual information, as described in Eq.~\eqref{equ:Information Atoms' relationship_2}.}
\label{fig:PID}
\end{figure}

Notice that it suffices to specify the definition of any one of the information atoms, and the definitions for the remaining atoms follow from Axiom~\ref{def: Information Atoms' relationship}. Consequently,~\cite{williams2010nonnegative,lizier2013towards} chose to specify $\operatorname{Red}$, and provided three additional axioms which~$\operatorname{Red}$ should satisfy. 

The first additional axiom is \textit{commutativity} of the source variables, which implies that the order of the source variables must not affect the value of the redundant information. 

\begin{axiom} [Commutativity]
\label{Axiom: Commutativity}
Partial Information Decomposition satisfies $\operatorname{Red}(X,Y\to Z) = \operatorname{Red}(Y,X\to Z)$.
\end{axiom}   

The second is \textit{monotonicity}, which implies that the redundant information is non-increasing when adding a source variable, since the newly added variable cannot increase the redundancy between the original variables. We sidestep the discussion about monotonicity with more than two variables, which is not our focus in this paper, even though it can be easily obtained by extending our definition to more than two source variables.

The third is \textit{self-redundancy}, which defines the redundant information from \textit{one} source variable to the target variable (i.e.,~$\operatorname{Red}(X\to Z)$) as the mutual information between them. In the case of two source variables considered herein, monotonicity and self-redundancy merge into the following single axiom.

\begin{axiom} [Monotonicity and self-redundancy]
\label{axiom: Monotonicity and Self-redundancy}
Partial Information Decomposition satisfies $\operatorname{Red}(X,Y \to Z) \le \min\{I(X;Z),I(Y;Z)\}$.
\end{axiom}

Notice that Axiom~\ref{axiom: Monotonicity and Self-redundancy}, alongside Axiom~\ref{def: Information Atoms' relationship} (specifically~\eqref{equ:Information Atoms' relationship_2}), imply that~$\operatorname{Un}$ is a nonnegative quantity. The nonnegativity of~$\operatorname{Red}$ is stated in~\cite{williams2010nonnegative,lizier2013towards} as a separate axiom, shown next.

\begin{axiom} [Nonnegativity]
\label{Axiom: Nonnegativity}
Partial Information Decomposition satisfies $\operatorname{Red}(X,Y \to Z) \ge 0$.
\end{axiom}

The nonnegativity of~$\operatorname{Syn}$ is normally not listed as an axiom, since it is debatable if it should or should not be nonnegative; we will show that our method yields nonnegative~$\operatorname{Syn}$ under the \textit{closed-system assumption} (i.e.,~$H(Z|X,Y)=0$) in Section~\ref{Section 3}, and further discussion is given in Section~\ref{section: discussion}.

Besides, subsequent to~\cite{williams2010nonnegative,lizier2013towards}, studies suggested two additional properties, \textit{additivity} and \textit{continuity}~\cite{bertschinger2014quantifying,rauh2023continuity}. Additivity implies that whenever independent variable systems are considered, the joint information measures should be the sum of the information measures of each individual system. This is the case, for instance, in joint entropy of two independent variables. 

\begin{property} [Additivity] 
\label{property: Additivity}
Partial Information Decomposition of two independent systems
$\mathcal{D}_{X,Y,Z}$ and $\mathcal{D}_{\bar{X},\bar{Y},\bar{Z}}$ satisfy 
\begin{align*}
\operatorname{Un}&((X,\bar{X}) \to (Z,\bar{Z}) | (Y,\bar{Y}))\nonumber \\
&= \operatorname{Un}(X\to Z|Y)+ \operatorname{Un}(\bar{X}\to \bar{Z}|\bar{Y}), \mbox{ and}\\
\operatorname{F}&((X,\bar{X}),(Y,\bar{Y})\to (Z,\bar{Z}))=\\
&\operatorname{F}(X,Y\to Z) + \operatorname{F}(\bar{X},\bar{Y}\to \bar{Z}),
\end{align*}
for every~$\operatorname{F}\in\{\operatorname{Red},\operatorname{Syn}\}$.
\end{property}

Continuity implies that small changes in the probability distribution lead to small changes in the value of the information measure. It ensures that the measure behaves predictably and is a key property in information theory, particularly for measures like entropy and mutual information.

\begin{property} [Continuity] 
\label{property: Continuity}
$\operatorname{Red}$, $\operatorname{Un}$, and~$\operatorname{Syn}$ are continuous functions from the underlying joint distributions $\mathcal{D}_{X,Y,Z}$ to $\mathbb{R}$.
\end{property}

In addition, another well-known property is \textit{independent identity}~\cite{ince2017measuring}, which asserts that in a system of two independent source variables and a target variable which equals to their joint distribution, the redundant information should be zero.

\begin{property}[Independent Identity]
\label{property: Independent Identity}
If $I(X,Y)=0$ and $Z=(X,Y)$, then $\operatorname{Red}(X,Y\to Z) = 0$.
\end{property}

We mention that several important properties can be inferred from the above. For example, the non-negativity of~$\operatorname{Un}$ can be obtained from Axiom~\ref{def: Information Atoms' relationship} and Axiom~\ref{axiom: Monotonicity and Self-redundancy} as mentioned earlier; the commutativity of~$\operatorname{Syn}$ follows from Axiom~\ref{def: Information Atoms' relationship} and Axiom~\ref{Axiom: Commutativity}; the difference between~\eqref{equ:Information Atoms' relationship_2} and~\eqref{equ:Information Atoms' relationship_3} is often called \textit{consistency}~\cite{bertschinger2014quantifying}, etc.

Finally, we emphasize once again that none of the existing operational definitions of the information atoms satisfy all of the above. A comprehensive list of violations is beyond the page limit of this paper, and yet we briefly mention that Axiom~\ref{Axiom: Nonnegativity} (nonnegativity) is violated by \cite{griffith2014intersection,ince2017measuring,finn2018pointwise} (although some sources do not refer to non-negativity as a requirement); Property~\ref{property: Additivity} (additivity) is violated by all works except~\cite{bertschinger2014quantifying}, ~\cite{griffith2015quantifying}, ~\cite{griffith2014intersection}, and~\cite{kolchinsky2022novel} according to~\cite{rauh2023continuity}; Property~\ref{property: Independent Identity} (independent identity) is violated by~\cite{williams2010nonnegative}; Property~\ref{property: Continuity} (continuity) is violated by \cite{harder2013bivariate,griffith2014intersection}, \cite{griffith2015quantifying}, \cite{kolchinsky2022novel}, etc.

\section{Proposed information decomposition definition}
\label{Section 3}
In this section, we present our operational definition of~$\operatorname{Un}$, from which the definitions of the remaining information atoms follow. 
Then, we explain the logic behind this definition, and prove that it satisfies all the axioms and properties proposed in~Section~\ref{section:framework}. 

\subsection{Definition of Information Atoms}
Given a system~$X,Y,Z$ define a new random variable~$X'$ over the alphabet~$\mathcal{X}$ via its conditional joint distribution as follows:
\begin{align}\label{equation:X'Z}
&\Pr(X'=x,Z=z|Y=y) \nonumber \\
&\triangleq\Pr(X=x|Z=z)\Pr(Z=z|Y=y),
\end{align}
meaning, for every~$x\in\mathcal{X}$,
\begin{align*}
    \Pr(X'=x)=\sum_{y,z\in\mathcal{Y}\times \mathcal{Z}}\Pr(X'=x,Z=z|Y=y)\Pr(Y=y).
\end{align*}
The variable~$X'$ is well-defined since all probabilities are non-negative, and since
\begin{align*}
    &\sum_{x\in\mathcal{X}}\Pr(X'=x)\nonumber \\ &=\sum_{x,y,z\in\mathcal{X}\times\mathcal{Y}\times \mathcal{Z}}\Pr(X'=x,Z=z|Y=y)\Pr(Y=y) \nonumber \\ 
    &=\sum_{y\in\mathcal{Y}}\Pr(Y=y) \nonumber =1.
\end{align*}
\begin{definition}[Unique Information]\label{definition:un}
The unique information from~$X$ to~$Z$ given~$Y$ is $\operatorname{Un}(X \to Z | Y) = I(X';Z|Y)$.
\end{definition}

The definitions for the remaining information atoms are then implied by Axiom \ref{def: Information Atoms' relationship} as follows. 

\begin{definition}[Redundant Information] 
\label{definition:red}
The Redundant Information from $X$ and $Y$ to $Z$ is defined as:
\begin{align*}
\operatorname{Red}(X,Y\to Z) = I(X;Z) - \operatorname{Un}(X \to Z | Y).
\end{align*}
\end{definition}

\begin{definition}[Synergistic Information] 
\label{definition:syn}
The synergistic information from $X$ and $Y$ to $Z$ is defined as:
\begin{align*}
\operatorname{Syn}(X,Y\to Z) = I(X;Z|Y) - \operatorname{Un}(X \to Z | Y).
\end{align*}
\end{definition}

It should be noted that Definition~\ref{definition:red} and Definition~\ref{definition:syn} strictly depend on the order of the source variables; the commutativity of~$\operatorname{Red}$ (Axiom~\ref{Axiom: Commutativity}) will be addressed in the sequel, and the commutativity of~$\operatorname{Syn}$ follows from Axiom~\ref{def: Information Atoms' relationship} and Axiom~\ref{Axiom: Commutativity} as mentioned earlier. 
\subsection{Intuitive Explanation of Definition~\ref{definition:un}.}

Our definition of unique information~$\operatorname{Un}$ is derived from a newly defined \textit{do-operation}. 

\begin{definition} [Do-operation] 
\label{def:do operation}
Given $\mathcal{D}_{X,Z}$ and $\mathcal{D}_C $ such that the support of $\mathcal{D}_C $ is contained in the support of~$\mathcal{D}_Z$, let $do(\mathcal{D}_{X,Z}|\mathcal{D}_{C})=\mathcal{D}_{A,C}$, where
\begin{align}
\label{equ:def of do operation}
\Pr(A=x,C=z) =
\Pr(X=x|Z=z) \Pr(C=z)
\end{align}
for all $x,z \in \mathcal{X}\times \mathcal{Z}$.
\end{definition}

\ifthenelse{\boolean{IncludeAppendix}}{In Lemma~\ref{lem:support of definition do}, which is given and proved in Appendix~\ref{proof:def do operation}, it is shown }
{In \cite{lyu2024explicit}, we show }
that $\mathcal{D}_{A,C}$ in Definition~\ref{def:do operation} is well-defined in the sense that the right marginal of~$\mathcal{D}_{A,C}$ is identical to~$\mathcal{D}_C$. Therefore, there is no ambiguity in referring to both the input distribution and the right marginal of the output distribution by the same letter~$C$. This operation receives~$\mathcal{D}_{X,Z}$ and~$\mathcal{D}_C$, and outputs a joint distribution~$\mathcal{D}_{A,C}$ whose right marginal is~$\mathcal{D}_C$, and~$\mathcal{D}_{A|C=z}=\mathcal{D}_{X|Z=z}$ for all~$z\in\mathcal{Z}$. Using the do-operation,~$\operatorname{Un}$ can be defined equivalently as follows:
\begin{definition}[Unique Information, equivalent definition]\label{definition:un2}
For~$y\in\mathcal{Y}$ let~$C_y$ be a random variable with distribution~$\mathcal{D}_{C_y}=\mathcal{D}_{Z|Y=y}$, and let~$\mathcal{D}_{A_y,C_y}=do(\mathcal{D}_{X,Z}|\mathcal{D}_{C_y})$. The unique information from~$X$ to~$Z$ given~$Y$ is defined as:
\begin{align}
\operatorname{Un}(X \to Z | Y) 
&= \sum_{y\in \mathcal{Y}} \Pr(Y=y) I(A_y;C_y). \nonumber 
\end{align}
\end{definition}
The proof of equivalence is simple, and is given in 
\ifthenelse{\boolean{IncludeAppendix}}{Appendix~\ref{proof:equavelence of definitions}.}
{\cite{lyu2024explicit}.}
The idea behind Definition~\ref{definition:un2} is that by setting~$C=(Z|Y=y)$ for some~$y\in\mathcal{Y}$ in Definition~\ref{def:do operation}, and then by averaging the resulting mutual information values over all~$y\in\mathcal{Y}$, we eliminate the effect of~$Y$ from the directional dependence between~$X$ and~$Z$, without changing the directional dependence itself. 
A similar idea exists in Judea Pearl's do-calculus~\cite{pearl2009causality} (also \cite{hoel2013quantifying}), where a node in a Bayesian network is set to a certain value while removing all incoming dependencies to that node, thereby distilling the causal relationship between that value and the remainder of the network.

\subsection{Satisfaction of axioms and properties}


To show that our definition satisfies the axioms and properties mentioned in Section~\ref{section:framework}, we require the following technical lemma. The proof is given in 
\ifthenelse{\boolean{IncludeAppendix}}{Appendix~\ref{proof:def conditional entropy}.}
{\cite{lyu2024explicit}.}

\begin{lemma}\label{le:invariant property of channel}
Following the notations of Definition~\ref{definition:un}, we have that $H(X|Z) = H(X'|Z)=H(X'|Z,Y)$, and that $H(X')=H(X)$.
\end{lemma}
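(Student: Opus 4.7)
The plan is to show that the do-operation, when applied with $\mathcal{D}_{C_y} = \mathcal{D}_{Z\mid Y=y}$, preserves the \emph{conditional} distribution of the first coordinate given the third. Concretely, I would prove the invariance
\begin{equation*}
\Pr(A_y = x \mid C_y = z) = \Pr(X = x \mid Z = z)
\end{equation*}
for every $z$ with $\Pr(C_y = z) > 0$, and derive the lemma as a routine consequence.

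To establish the invariance, I would start from Definition~\ref{def:do operation} and marginalize the middle coordinate: for any $z$ in the support of $C_y$ (which, by Lemma~\ref{lem:support of definition do}, lies inside the support of $Z$), summing \eqref{equ:def of do operation} over $y' \in \mathcal{Y}$ gives
\begin{equation*}
\Pr(A_y = x, C_y = z) \;=\; \frac{\Pr(X=x, Z=z)\,\Pr(C_y = z)}{\Pr(Z=z)}.
\end{equation*}
Dividing by $\Pr(C_y = z)$ yields $\Pr(A_y = x \mid C_y = z) = \Pr(X = x \mid Z = z)$, as claimed. In particular, $H(A_y \mid C_y = z) = H(X \mid Z=z)$ whenever $\Pr(C_y = z) > 0$.

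With this in hand, I would finish by expanding the conditional entropy and swapping the order of summation. Using $\Pr(C_y = z) = \Pr(Z = z \mid Y = y)$ (which holds by construction of $C_y$) together with the law of total probability $\sum_{y} \Pr(Y=y)\Pr(Z=z\mid Y=y) = \Pr(Z=z)$, I get
\begin{align*}
\sum_{y \in \mathcal{Y}} \Pr(Y=y)\, H(A_y \mid C_y)
&= \sum_{y} \Pr(Y=y) \sum_{z} \Pr(C_y=z)\, H(X \mid Z=z)\\
&= \sum_{z} H(X \mid Z=z) \sum_{y} \Pr(Y=y)\Pr(Z=z\mid Y=y)\\
&= \sum_{z} \Pr(Z=z)\, H(X \mid Z=z) \;=\; H(X\mid Z).
\end{align*}

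The only subtlety — and thus the main obstacle — is bookkeeping on supports: terms where $\Pr(Z=z)=0$ are excluded by the first case of \eqref{equ:def of do operation}, and terms where $\Pr(C_y=z)=0$ contribute nothing to either the inner or the outer sum, so swapping the two summations is justified. Everything else reduces to the definitional identity above and elementary manipulations, so I expect the proof to be short once the invariance $\mathcal{D}_{A_y\mid C_y=z} = \mathcal{D}_{X\mid Z=z}$ is written down.
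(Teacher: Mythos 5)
Your proof is correct and follows essentially the same route as the paper: marginalizing \eqref{equ:def of do operation} over the middle coordinate to get $\Pr(A_y=x,C_y=z)=\Pr(X=x,Z=z)\Pr(Z=z|Y=y)/\Pr(Z=z)$, deducing $\Pr(A_y=x\mid C_y=z)=\Pr(X=x\mid Z=z)$, and then averaging over $y$ via $\sum_y\Pr(Y=y)\Pr(Z=z\mid Y=y)=\Pr(Z=z)$. The only difference is presentational: you isolate the invariance $\mathcal{D}_{A_y\mid C_y=z}=\mathcal{D}_{X\mid Z=z}$ as an explicit intermediate claim, whereas the paper carries the same cancellation out inside the entropy sums.
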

\begin{corollary}
\label{corollary:sumH-H=U}
Unique information (Def.~\ref{definition:un}) can be written as:
\begin{align*}
\operatorname{Un}(X \to Z | Y)&= I(X';Z|Y)=X(X'|Y)-H(X'|Z,Y)\\
&=H(X'|Y)-H(X|Z).
\end{align*}    
\end{corollary}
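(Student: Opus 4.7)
The plan is to unwind the do-operation and show that, for each fixed $y$, the conditional distribution of $A_y$ given $C_y = z$ coincides with the conditional distribution of $X$ given $Z = z$ (wherever these are defined). Once this invariance is established, the claim reduces to a straightforward application of the tower property of expectation.

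First, I would compute the joint probabilities of $(A_y, C_y)$ from Definition~\ref{def:do operation}. Marginalizing the middle coordinate $B_y$ in
\begin{equation*}
\Pr(A_y,B_y,C_y = x,y',z) = \frac{\Pr(X,Y,Z=x,y',z)\,\Pr(C_y=z)}{\Pr(Z=z)}
\end{equation*}
(on the support of $Z$) and using $\Pr(C_y=z) = \Pr(Z=z\mid Y=y)$, one obtains
\begin{equation*}
\Pr(A_y=x,\,C_y=z) = \Pr(X=x\mid Z=z)\,\Pr(Z=z\mid Y=y).
\end{equation*}
Since the second factor equals $\Pr(C_y=z)$, dividing yields $\Pr(A_y=x\mid C_y=z) = \Pr(X=x\mid Z=z)$ whenever $\Pr(C_y=z)>0$.

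From this invariance of the conditional law it follows immediately that $H(A_y \mid C_y = z) = H(X \mid Z = z)$ for every $z$ in the support of $C_y$, and therefore
\begin{equation*}
H(A_y\mid C_y) = \sum_{z\in\mathcal{Z}} \Pr(Z=z\mid Y=y)\,H(X\mid Z=z).
\end{equation*}
Weighting by $\Pr(Y=y)$ and summing, interchanging the order of summation, and using the identity $\sum_{y\in\mathcal{Y}} \Pr(Y=y)\Pr(Z=z\mid Y=y) = \Pr(Z=z)$, I recover $\sum_z \Pr(Z=z)H(X\mid Z=z) = H(X\mid Z)$, which is exactly the claim.

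The only subtle point—and the one I would be careful about—is the treatment of the support: the do-operation assigns probability $0$ whenever $\Pr(Z=z)=0$, and $\mathcal{D}_{C_y}$ has the same support as (a subset of) $\mathcal{D}_Z$, so the formula $\Pr(X=x\mid Z=z)$ is well-defined on exactly the set of $z$'s that contribute nonzero mass to the sums above. With the standard convention $0\log 0 = 0$, no entropy term is ever ambiguous, and the bookkeeping goes through cleanly. I do not anticipate a genuine obstacle; the proof is really a single calculation, with the conceptual content being the invariance $\Pr(A_y\mid C_y) = \Pr(X\mid Z)$ enforced by the do-operation.
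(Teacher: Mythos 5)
Your proposal is correct and takes essentially the same route as the paper: the invariance $\Pr(A_y=x\mid C_y=z)=\Pr(X=x\mid Z=z)$ together with the tower-property summation is precisely the paper's Lemma~\ref{le:invariant property of channel} (that $H(X|Z)=\sum_{y\in\mathcal{Y}}\Pr(Y=y)H(A_y|C_y)$), which you re-derive rather than cite, and the corollary then follows by combining it with $I(A_y;C_y)=H(A_y)-H(A_y|C_y)$ and Definition~\ref{definition:un}, exactly as the paper does and as your argument implicitly does in its last step. Note only that the corollary's left-hand side as printed, $\operatorname{Un}(Y\to Z\mid X)$, is a typo for $\operatorname{Un}(X\to Z\mid Y)$, which is the quantity your calculation (correctly) establishes.
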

\begin{corollary}
\label{cor:sumH(A)<H(X)}
By Lemma~\ref{le:invariant property of channel}, and since conditioning reduces entropy, we have that $H(X'|Y)\le H(X')=H(X)$.
\end{corollary}

Based on the above lemmas and corollaries, we are in a position to prove that our definition of~$\operatorname{Un}$ satisfies the required axioms.

\subsubsection{Proof of Axiom \ref{def: Information Atoms' relationship}, Information atoms relationship}
Follows immediately from Definition~\ref{definition:red} and Definition~\ref{definition:syn}.

\subsubsection{Proof of Axiom \ref{Axiom: Commutativity}, Commutativity}
First, Definition \ref{definition:un} and  Definition \ref{definition:red} provide the following equivalent way for computing~$\operatorname{Red}$, which is proved in 
\ifthenelse{\boolean{IncludeAppendix}}{Appendix~\ref{proof: redefinition of redundant information}.}
{\cite{lyu2024explicit}.}
\begin{lemma} \label{le:redefinition of redundant information}
Redundant information (Def.~\ref{definition:red})  can alternatively 
be written as $\operatorname{Red}(X,Y\to Z)=I(X';Y)$.
\end{lemma}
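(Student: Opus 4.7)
The plan is to rewrite $\operatorname{Red}(X,Y\to Z)$ as a difference of entropies involving $A_{Z|Y}$, so that the result emerges simply as $H(A_{Z|Y}) - H(A_{Z|Y}|Y)$. Starting from Definition~\ref{definition:red}, I would write $\operatorname{Red}(X,Y\to Z) = I(X;Z) - \operatorname{Un}(X \to Z | Y)$ and expand $I(X;Z) = H(X) - H(X|Z)$. Then I would substitute the alternative expression for $\operatorname{Un}$ provided by Corollary~\ref{corollary:sumH-H=U}, namely $\operatorname{Un}(X \to Z | Y) = \sum_{y}\Pr(Y=y)H(A_y) - H(X|Z)$. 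The two $H(X|Z)$ terms cancel, leaving
\[
\operatorname{Red}(X,Y\to Z) = H(X) - \sum_{y\in\mathcal{Y}}\Pr(Y=y)H(A_y).
\]

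Next I would translate both terms on the right-hand side into entropies of the auxiliary variable $A_{Z|Y}$ from Definition~\ref{def:WholeTable}. Lemma~\ref{le:H(A)=H(X)} converts the first term directly: $H(X) = H(A_{Z|Y})$. Corollary~\ref{corollary:H(A)=H(A|Y)} states that $H(A_{Z|Y}|Y=y) = H(A_y)$ for every $y$, so taking the expectation over $Y$ gives $\sum_{y}\Pr(Y=y)H(A_y) = H(A_{Z|Y}|Y)$. Substituting both identities yields
\[
\operatorname{Red}(X,Y\to Z) = H(A_{Z|Y}) - H(A_{Z|Y}|Y) = I(A_{Z|Y};Y),
\]
which is the claim.

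There is no real obstacle here: the lemma is essentially a bookkeeping identity that assembles two pieces already established. The only subtlety to watch is the direction of the cancellation of the $H(X|Z)$ term, which depends on the sign in Corollary~\ref{corollary:sumH-H=U}; interpreting that corollary as a statement about $\operatorname{Un}(X\to Z|Y)$ (consistent with Definition~\ref{definition:un} and Lemma~\ref{le:invariant property of channel}, which together give $\operatorname{Un}(X\to Z|Y)=\sum_y\Pr(Y=y)H(A_y)-H(X|Z)$) is what makes the cancellation work. Once that is verified, the two identities $H(A_{Z|Y})=H(X)$ and $H(A_{Z|Y}|Y=y)=H(A_y)$ from the auxiliary construction do the rest of the work.
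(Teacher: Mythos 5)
Your proposal is correct and follows essentially the same route as the paper's own proof: both cancel the $H(X|Z)$ term via Corollary~\ref{corollary:sumH-H=U} to reach $H(X)-\sum_{y}\Pr(Y=y)H(A_y)$, then apply Lemma~\ref{le:H(A)=H(X)} and Corollary~\ref{corollary:H(A)=H(A|Y)} to rewrite this as $H(A_{Z|Y})-H(A_{Z|Y}|Y)=I(A_{Z|Y};Y)$. Your reading of Corollary~\ref{corollary:sumH-H=U} as a statement about $\operatorname{Un}(X\to Z|Y)$ is also the intended one.
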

Similarly, by switching between~$X$ and~$Y$ in Definition \ref{definition:red} we have that $\operatorname{Red}(Y, X \to Z) = I(Y;Z) - \operatorname{Un}(Y \to Z | X)$; based on Lemma~\ref{le:redefinition of redundant information}, this equals to $I(X;Y')$, where~$Y'$ is defined analogously to~$X'$ in~\eqref{equation:X'Z}, i.e., 
\begin{align}\label{equation:Y'Z}
&\Pr(Y'=y,Z=z|X=x) \nonumber\\
&=\Pr(Y=y|Z=z)\Pr(Z=z|X=x).
\end{align}

Then, we can conclude the commutativity of redundant information through the following lemma, which is proved in
\ifthenelse{\boolean{IncludeAppendix}}{Appendix~\ref{proof:Commutativity of Redundant Information}.}
{\cite{lyu2024explicit}.}

\begin{lemma}[Commutativity of Redundant Information]
\label{lem: Commutativity} 
For~$X'$ and~$Y'$ as above, we have that $I(X';Y) = I(X;Y')$.
\end{lemma}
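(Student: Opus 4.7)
The plan is to show the stronger statement that the two random pairs $(A_{Z|Y}, Y)$ and $(X, B_{Z|X})$ share the \emph{same joint distribution} over $\mathcal{X} \times \mathcal{Y}$; the equality of mutual informations then follows immediately since $I(\cdot\,;\cdot)$ depends only on the joint law of its arguments.

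To do this, I would first unwind the do-operation. Applied to Definition~\ref{def:do operation} with $\mathcal{D}_{C_y} = \mathcal{D}_{Z|Y=y}$, marginalizing over the second coordinate gives, for every $z$ with $\Pr(Z=z)>0$,
\begin{align*}
\Pr(A_y=x, C_y=z) &= \Pr(X=x \mid Z=z)\,\Pr(Z=z \mid Y=y),
\end{align*}
and hence, summing over $z$,
\begin{align*}
\Pr(A_y=x) = \sum_{z \in \mathcal{Z}} \Pr(X=x \mid Z=z)\,\Pr(Z=z \mid Y=y).
\end{align*}

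Next, I would multiply by $\Pr(Y=y)$ to form the joint law on $(A_{Z|Y}, Y)$. Using Definition~\ref{def:WholeTable} and Bayes' rule to rewrite $\Pr(Z=z \mid Y=y)\,\Pr(Y=y) = \Pr(Y=y \mid Z=z)\,\Pr(Z=z)$, I obtain the symmetric expression
\begin{align*}
\Pr(A_{Z|Y}=x, Y=y) = \sum_{z \in \mathcal{Z}} \Pr(X=x \mid Z=z)\,\Pr(Y=y \mid Z=z)\,\Pr(Z=z).
\end{align*}
Performing the analogous calculation for $B_{Z|X}$ (which is defined by interchanging the roles of $X$ and $Y$ throughout Definitions~\ref{def:do operation}, \ref{definition:un}, and~\ref{def:WholeTable}) yields exactly the same right-hand side for $\Pr(X=x, B_{Z|X}=y)$. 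Consequently the two joint distributions coincide, and therefore $I(A_{Z|Y};Y) = I(X;B_{Z|X})$.

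I expect no serious obstacle beyond careful bookkeeping in the first calculation, in particular handling $z$ with $\Pr(Z=z)=0$ (which contribute zero to both sides by the convention in Definition~\ref{def:do operation}, so they can simply be excluded from the sum). The conceptual content is that the do-operation, once averaged against $\Pr(Y=y)$ and viewed as a joint law on $(x,y)$, produces the ``back-and-forth through $Z$'' kernel $\sum_z \Pr(X=x \mid Z=z)\Pr(Y=y \mid Z=z)\Pr(Z=z)$, which is manifestly symmetric in the roles of $X$ and $Y$; this is precisely why the redundant atom inherits commutativity.
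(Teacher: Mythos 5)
Your proposal is correct and follows essentially the same route as the paper: both prove the stronger claim that $(A_{Z|Y},Y)$ and $(X,B_{Z|X})$ have identical joint distributions by reducing each to the symmetric kernel $\sum_{z}\Pr(X=x\mid Z=z)\Pr(Y=y\mid Z=z)\Pr(Z=z)$, from which equality of the mutual informations is immediate. The only cosmetic difference is that the paper explicitly runs the chain of equalities backwards to reach $\Pr(X=x,B_{Z|X}=y)$, whereas you invoke the manifest symmetry of the intermediate expression.
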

Combining Lemma~\ref{le:redefinition of redundant information} and Lemma~\ref{lem: Commutativity} readily implies the commutativity of~$\operatorname{Red}$, i.e., $\operatorname{Red}(X,Y \to Z) = \operatorname{Red}(Y,X \to Z).$

\subsubsection{Proof of Axiom \ref{axiom: Monotonicity and Self-redundancy}, Monotonicity and self-redundancy}
According to Definition~\ref{definition:un}, $\operatorname{Un}$ is nonnegative as conditional mutual information. Therefore, $\operatorname{Red}(X,Y\to Z) \le I(X;Z)$ by Definition~\ref{definition:red}. Similarly, by Definition~\ref{definition:red} we have~$\operatorname{Red}(Y,X\to Z)=I(Y;Z)-\operatorname{Un}(Y\to Z|X)$, and hence $\operatorname{Red}(Y,X\to Z)\le I(Y;Z)$. Since~$\operatorname{Red}$ is symmetric by Axiom~\ref{Axiom: Commutativity}, it follows that
\begin{align*}
    \operatorname{Red}(X,Y\to Z)\le \min\{I(X;Z),I(Y;Z)\}.
\end{align*}




\subsubsection{Proof of Axiom \ref{Axiom: Nonnegativity}, Nonnegativity}
We begin by showing that~$\operatorname{Red}$ is nonnegative, for which we require the following lemma, proved in
\ifthenelse{\boolean{IncludeAppendix}}{Appendix~\ref{proof:un<I}.}
{\cite{lyu2024explicit}.}
\begin{lemma}
\label{le:un<I}
Unique information (Def.~\ref{definition:un}) is bounded from above by mutual information, i.e.,
    \begin{align*}
        \operatorname{Un}(X \to Z | Y) \le I (X;Z).
    \end{align*}
\end{lemma}

Then, nonnegativity of~$\operatorname{Red}$ follows from Lemma~\ref{le:un<I}.

\begin{corollary}[Nonnegativity of Redundant Information]
Redundant information (Def.~\ref{definition:red}) is nonnegative, i.e.,
\begin{align*}
\operatorname{Red}(X,Y\to Z) \ge 0.
\end{align*}
\end{corollary}

Finally, we remark that even though it is not a required axiom, non-negativity of~$\operatorname{Syn}$ can be proved under an additional assumption as follows.

\begin{remark}[Nonnegativity of Synergistic Information]
\label{remark:Nonnegativity of Synergistic Information}
    Suppose~$H(Z|X,Y)=0$ (closed system assumption). Since $\operatorname{Un}(X \to Z | Y)=I(X';Z|Y)\le H(Z|Y)$, it follows from Definition~\ref{definition:syn} that $\operatorname{Syn}(X,Y\to Z) \ge 0$.
\end{remark}
    




\subsubsection{Proof of Property \ref{property: Additivity}, Additivity}
The following lemma is proved in
\ifthenelse{\boolean{IncludeAppendix}}{Appendix~\ref{proof:Additivity of Unique Information}.}
{\cite{lyu2024explicit}.}
\begin{lemma}[Additivity of Unique Information]
\label{le:Additivity of Unique Information}
For two independent sets of variables $X,Y,Z$ and $\bar{X},\bar{Y},\bar{Z}$, unique information (Def.~\ref{definition:un}) is additive:
\begin{align}
&\operatorname{Un}((X,\bar{X}) \to (Z,\bar{Z}) | (Y,\bar{Y}))\nonumber \\
&= \operatorname{Un}(X\to Z|Y)+ \operatorname{Un}(\bar{X}\to \bar{Z}|\bar{Y})
\end{align}
\end{lemma}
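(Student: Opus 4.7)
The plan is to unfold both sides using Definition~\ref{definition:un} and show, via the independence of the two systems, that the do-operation applied to the product distribution factorizes as the product of the two individual do-operations. For each pair $(y,y') \in \mathcal{Y}\times\mathcal{Y}'$, let $\mathcal{D}_{C_{(y,y')}} = \mathcal{D}_{(Z,Z')\vert (Y,Y')=(y,y')}$ and $\mathcal{D}_{A_{(y,y')},B_{(y,y')},C_{(y,y')}} = do(\mathcal{D}_{X,X',Y,Y',Z,Z'}\mid \mathcal{D}_{C_{(y,y')}})$. By the independence assumption the target conditional marginal factors as $\mathcal{D}_{Z\vert Y=y}\times\mathcal{D}_{Z'\vert Y'=y'}$, and the joint source distribution factors as $\mathcal{D}_{X,Y,Z}\times\mathcal{D}_{X',Y',Z'}$.

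Substituting these factorizations into the explicit formula \eqref{equ:def of do operation} of Definition~\ref{def:do operation}, each probability $\Pr(A_{(y,y')}{=}(x,x'), B_{(y,y')}{=}(w,w'), C_{(y,y')}{=}(z,z'))$ splits as a product, so that $(A_{(y,y')},C_{(y,y')})$ has the same joint distribution as $((A_y,A'_{y'}),(C_y,C'_{y'}))$ where the two pairs $(A_y,C_y)$ and $(A'_{y'},C'_{y'})$ are mutually independent. This is the core computational step and the main (if routine) obstacle: one must carefully chase the denominators $\Pr(Z=z,Z'=z')$ and verify that they split as $\Pr(Z=z)\Pr(Z'=z')$ so that the whole expression factorizes cleanly; the case analysis for zero probabilities in \eqref{equ:def of do operation} has to be handled as well, but it follows immediately since a product vanishes iff one factor does.

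Once the factorization is established, the standard identity $I((U_1,U_2);(V_1,V_2)) = I(U_1;V_1) + I(U_2;V_2)$ for mutually independent pairs $(U_1,V_1) \perp (U_2,V_2)$ yields
\begin{equation*}
I(A_{(y,y')};C_{(y,y')}) = I(A_y;C_y) + I(A'_{y'};C'_{y'}).
\end{equation*}
Summing over $(y,y')$, weighting by $\Pr((Y,Y')=(y,y')) = \Pr(Y=y)\Pr(Y'=y')$, and splitting the double sum via the marginalizations $\sum_y \Pr(Y=y)=\sum_{y'}\Pr(Y'=y')=1$ produces exactly $\operatorname{Un}(X\to Z\vert Y) + \operatorname{Un}(X'\to Z'\vert Y')$, completing the proof. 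The additivity of $\operatorname{Red}$ and $\operatorname{Syn}$ then falls out of Definitions~\ref{definition:red} and~\ref{definition:syn} combined with the well-known additivity of mutual information and conditional mutual information under independence, but this is outside the scope of the lemma itself.
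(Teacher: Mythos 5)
Your proposal is correct and follows essentially the same route as the paper's own proof: you factorize the joint do-operation distribution using the independence of $(X,Y,Z)$ and $(X',Y',Z')$ (including the splitting of the denominator $\Pr(Z=z,Z'=z')=\Pr(Z=z)\Pr(Z'=z')$ and of the conditional marginal $\mathcal{D}_{(Z,Z')|(Y,Y')=(y,y')}$), then apply additivity of mutual information for independent pairs and split the weighted double sum over $(y,y')$. No gaps beyond the routine computations you already identify.
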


Since mutual information and conditional entropy are additive in the above sense, by Definition~\ref{definition:red} and Definition~\ref{definition:syn}, alongside Lemma~\ref{le:Additivity of Unique Information}, $\operatorname{Red}$ and~$\operatorname{Syn}$ are additive as well.

\subsubsection{Proof of Property \ref{property: Continuity}, Continuity}
We begin by showing that~$\operatorname{Red}$ is continuous, for which we require the following lemma proved in
\ifthenelse{\boolean{IncludeAppendix}}{Appendix~\ref{proof:continuity of do operation}.}
{\cite{lyu2024explicit}.}
\begin{lemma}[Continuity of Redundant Information]
\label{le:continuity of do operation}
The redundant information (Def.~\ref{definition:red}) is a continuous function of the input distribution~$\mathcal{D}_{X,Y,Z}$ to $\mathbb{R}$.
\end{lemma}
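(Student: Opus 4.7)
The plan is to reduce continuity of $\operatorname{Red}$ to continuity of $\operatorname{Un}$, and then verify the latter by a direct analysis of the do-operation. First, by Definition~\ref{definition:red} one has $\operatorname{Red}(X,Y\to Z) = I(X;Z) - \operatorname{Un}(X\to Z|Y)$, and since Shannon's mutual information $I(X;Z)$ is a continuous function of $\mathcal{D}_{X,Z}$ (and hence of $\mathcal{D}_{X,Y,Z}$), it suffices to show that $\operatorname{Un}(X\to Z|Y)$ is continuous.

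To that end, I would unwind Definition~\ref{definition:un} into a closed-form expression. Substituting $\mathcal{D}_{C_y}=\mathcal{D}_{Z|Y=y}$ into Definition~\ref{def:do operation} and marginalizing over the middle coordinate yields
\begin{align*}
\Pr(Y=y)\Pr(A_y=x,C_y=z)=\frac{\Pr(X=x,Z=z)\Pr(Y=y,Z=z)}{\Pr(Z=z)},
\end{align*}
with the convention that the right-hand side equals $0$ whenever $\Pr(Z=z)=0$. The key observation is that $\Pr(X=x,Z=z)\le\Pr(Z=z)$, so this expression extends continuously across the boundary $\{\Pr(Z=z)=0\}$ as a function of $\mathcal{D}_{X,Y,Z}$; the analogous statement holds for $\Pr(Y=y)\Pr(A_y=x)$ obtained by summing in $z$, and trivially for $\Pr(Y=y)\Pr(C_y=z)=\Pr(Y=y,Z=z)$.

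Finally, I would rewrite the weighted summand $\Pr(Y=y)\,I(A_y;C_y)$ entirely in terms of these scaled probabilities, so as to avoid any division by $\Pr(Y=y)$. Expanding $\log(a/b)=\log a-\log b$ produces an explicit residual term $\Pr(Y=y)\log\Pr(Y=y)$, which is continuous under the standard $0\log 0=0$ convention, together with terms of the form $q\log q$ and $-q\log r$ with $0\le q\le r$ — each of which extends continuously to the boundary $\{r=0\}$, since $r=0$ forces $q=0$. Summing the resulting continuous functions over $y\in\mathcal{Y}$ gives continuity of $\operatorname{Un}$, and hence of $\operatorname{Red}$. The main obstacle is the careful bookkeeping near the boundary of the probability simplex, where several marginals vanish simultaneously and the do-operation produces nominal $0/0$ expressions; the dominating inequality $\Pr(A_y=x,C_y=z)\le\min\{\Pr(A_y=x),\Pr(C_y=z)\}$ (applied after scaling by $\Pr(Y=y)$) is precisely what guarantees that every singular numerator is dominated by its denominator and therefore extends continuously by zero.
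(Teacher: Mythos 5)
Your proposal is correct, but it takes a noticeably different route from the paper. The paper first invokes Lemma~\ref{le:redefinition of redundant information} to write $\operatorname{Red}(X,Y\to Z)=I(A_{Z|Y};Y)$, so that continuity of $\operatorname{Red}$ reduces to continuity of the map $\mathcal{D}_{X,Y,Z}\mapsto\mathcal{D}_{A_{Z|Y},Y}$ followed by the standard continuity of mutual information of a single joint pmf; the entries of $\mathcal{D}_{A_{Z|Y},Y}$ are exactly the quantities $\Pr(X,Y=x,y'\,|\,Z=z)\Pr(Y,Z=y,z)$ summed over $y',z$, and the only delicate point is the case $\Pr(Z=z)=0$, handled by a bounded-times-vanishing argument. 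You instead prove continuity of $\operatorname{Un}(X\to Z|Y)$ directly and obtain $\operatorname{Red}=I(X;Z)-\operatorname{Un}$ by subtraction (the paper goes in the opposite direction, deriving continuity of $\operatorname{Un}$ and $\operatorname{Syn}$ from that of $\operatorname{Red}$). Because your route keeps the per-$y$ mutual informations $I(A_y;C_y)$, you must neutralize the normalization by $\Pr(Y=y)$, which you do by expanding $\Pr(Y=y)I(A_y;C_y)$ in the scaled quantities $q_{xz}=\Pr(Y=y)\Pr(A_y=x,C_y=z)$, $a_x$, $c_z$, $p_y$ and checking elementary terms; the paper's use of $A_{Z|Y}$ (Definition~\ref{def:WholeTable}) buys it exactly this bookkeeping for free, since the weights are already absorbed into a genuine joint distribution, while your argument is more self-contained in that it needs neither $A_{Z|Y}$ nor Lemma~\ref{le:redefinition of redundant information}. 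The core analytic ingredient is the same in both: at the singular set $\Pr(Z=z)=0$ the nominal $0/0$ is resolved because the conditional factor is bounded by $1$ and the remaining factor $\Pr(Y=y,Z=z)$ vanishes. One small tightening: your phrase that $-q\log r$ with $0\le q\le r$ extends continuously ``since $r=0$ forces $q=0$'' is not by itself the argument; what you need (and clearly intend, given your dominating inequality) is the bound $0\le -q\log r\le -r\log r\to 0$ as $r\to 0$ for $r\le 1$, which should be stated explicitly.
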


By Definition~\ref{definition:red} and Definition~\ref{definition:syn}, the continuity of~$\operatorname{Un}$ and $\operatorname{Syn}$ can also be derived.

\subsubsection{Proof of Property \ref{property: Independent Identity}, Independent Identity}
The following lemma is proved in
\ifthenelse{\boolean{IncludeAppendix}}{Appendix~\ref{proof:Independent Identity}.}
{\cite{lyu2024explicit}.}

\begin{lemma}
\label{le:Independent Identity}
The operator~$\operatorname{Red}$ satisfies Property~\ref{property: Independent Identity}.
\end{lemma}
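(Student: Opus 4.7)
The plan is to show directly that $\operatorname{Un}(X\to Z\vert Y)=I(X;Z)$ under the hypotheses $I(X;Y)=0$ and $Z=(X,Y)$, from which Definition~\ref{definition:red} immediately gives $\operatorname{Red}(X,Y\to Z)=0$. First I would compute $I(X;Z)$: since $Z=(X,Y)$ is a deterministic function of $(X,Y)$, we have $H(X\vert Z)=H(X\vert X,Y)=0$, so $I(X;Z)=H(X)$. The goal is therefore to verify that the right-hand side of Definition~\ref{definition:un} also equals $H(X)$.

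For each $y\in\mathcal{Y}$ with $\Pr(Y=y)>0$, I would unpack the auxiliary distribution $\mathcal{D}_{C_y}=\mathcal{D}_{Z\vert Y=y}$. Because $Z=(X,Y)$ and $X\perp Y$, the variable $C_y$ is supported on $\{(x,y):x\in\mathcal{X}\}$ with $\Pr(C_y=(x,y))=\Pr(X=x\vert Y=y)=\Pr(X=x)$. Plugging into Definition~\ref{def:do operation} with $\mathcal{D}_{A_y,B_y,C_y}=do(\mathcal{D}_{X,Y,Z}\vert \mathcal{D}_{C_y})$, for any target point $(x',y',(a,b))$ the joint $\Pr(X,Y,Z=x',y',(a,b))$ vanishes unless $x'=a$ and $y'=b$, and in that case equals $\Pr(X=a)\Pr(Y=b)$. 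The ratio in~\eqref{equ:def of do operation} then simplifies, and the support of $\mathcal{D}_{A_y,B_y,C_y}$ collapses to triples of the form $(a,y,(a,y))$, each with probability $\Pr(X=a)$.

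The key consequence is that $A_y$ is a deterministic function of $C_y$ (it reads off the first coordinate), while its marginal is exactly $\mathcal{D}_X$. Hence $I(A_y;C_y)=H(A_y)=H(X)$. Summing over $y$ yields
\begin{equation*}
\operatorname{Un}(X\to Z\vert Y)=\sum_{y\in\mathcal{Y}}\Pr(Y=y)\,H(X)=H(X)=I(X;Z),
\end{equation*}
and Definition~\ref{definition:red} closes the argument.

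The only delicate step is the bookkeeping in the do-operation: one must check that the support condition $\Pr(Z=z)=0\Rightarrow\Pr(A,B,C=\cdot,\cdot,z)=0$ does not eliminate mass we care about, and that independence $I(X;Y)=0$ is exactly what makes the ratio $\Pr(C_y=z)/\Pr(Z=z)$ cancel the $Y$-marginal cleanly so that $A_y$ becomes a function of $C_y$. Without independence, the conditional distribution $\mathcal{D}_{Z\vert Y=y}$ would put $y$-dependent weights on the first coordinate and the collapse above would fail, which is the correct behavior since independent identity is known to distinguish measures.
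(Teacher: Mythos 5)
Your proposal is correct and follows essentially the same route as the paper: both arguments show that under $I(X;Y)=0$ and $Z=(X,Y)$ the do-operation collapses $\mathcal{D}_{A_y,B_y,C_y}$ onto triples $(a,y,(a,y))$ with mass $\Pr(X=a)$, so that $I(A_y;C_y)=H(X)$ for every $y$, and then conclude $\operatorname{Red}=I(X;Z)-\operatorname{Un}(X\to Z\vert Y)=H(X)-H(X)=0$. The paper verifies the collapse by explicitly computing $H(A_y,C_y)=H(A_y)=H(C_y)=H(X)$, whereas you observe directly that $A_y$ is a deterministic function of $C_y$ with marginal $\mathcal{D}_X$; these are the same computation in slightly different clothing.
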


\section{Discussion}
\label{section: discussion}
In this paper, we proposed an explicit operational formula for PID, which is distinct from any existing approach, and proved that it satisfies all axioms and properties. In this section we provide further intuitive explanation for our approach.

First, we wish to elucidate the role that our do-operation plays in the definition of~$\operatorname{Un}$ (Definition~\ref{definition:un2}).
In a sense, the do-operation can be understood as adjusting the marginal distribution of the~$Z$ variable of~$\mathcal{D}_{X,Z}$, while impacting its connections with other variables as little as possible.
This understanding can be confirmed by Lemma~\ref{le:invariant property of channel}, which shows that the expected value of the conditional entropy after the do-operation retains its original value.
This resembles the invariance implied in Shannon's communication model~\cite{shannon2001mathematical}, where the conditional entropy of the output given the input is not affected by the input distribution.
From this perspective, $Z$ and $X$ can be regarded as the input and output of the channel, respectively, that indicates their ``relationship.'' The do-operation changes the distribution of the input~$Z$, but does not change the channel's characteristic (i.e.,~$H(X|Z)$). 

Based on this, Definition \ref{definition:un2} realizes the intuition that unique information should represent the relationship between source variable and target variable given the other source variable. So, we use the do-operation to control the marginal distribution of the target variable $Z$ to its conditional distribution given the value $y$ of some source variable~$Y$, and then use the expectation of mutual information $\sum_{y} \Pr(Y=y)I(A_y;C_y)$ to capture the ``connection'' between the specific source variable $X$ and target variable $Z$ given $Y$ after the do-operation. 

The reason this method can partition $I(X;Z|Y)$ to~$\operatorname{Syn}$ and~$\operatorname{Un}$ (Def.~\ref{definition:syn}), 
is that the do-operation eliminates high-order relations between~$Y$ and~$X,Z$, i.e.~$\operatorname{Syn}$. Specifically, conditional mutual information relies on the joint conditional probability $\mathcal{D}_{(X,Z)|y}$, in expectation over all~$y\in\mathcal{Y}$. This distribution includes both the conditional influence of $Y$ on $X,Z$, but also has a simultaneous influence on the relationship between~$X$ and~$Z$. 

However, Definition \ref{definition:un2} of unique information retains the relationship between $X$ and $Z$ without influence from $Y$ by using the conditional probability $\mathcal{D}_{Z|y}$, in expectation over all~$y\in\mathcal{Y}$, to perform the do-operation, which only reflects the conditional influence of $Y$ on $Z$. Therefore, the expectation of mutual information $\sum_{y} \Pr(Y=y)I(A_y;C_y)$ can accurately quantify the unique information, which represents the pure conditional mutual relationship.

In addition to the above analysis of do-operations in unique information, Lemma \ref{le:redefinition of redundant information} also brings another perspective worth discussing. Redundant information can be understood as the mutual information $I(X';Y)$ (or $I(X;Y')$) obtained by changing the joint probability distribution $\mathcal{D}_{X,Y}$ according to $\mathcal{D}_{X,Y,Z}$ without changing the marginal distribution $\mathcal{D}_{X}$ (or $\mathcal{D}_{Y}$) according to Lemma \ref{le:invariant property of channel} ($H(X')=H(X)$). 

As mentioned earlier, our definition of~$\operatorname{Syn}$ might be negative, unless the system is closed (i.e., $H(Z|X,Y)=0$, Remark~\ref{remark:Nonnegativity of Synergistic Information}). 
While~$\operatorname{Un}$ and~$\operatorname{Red}$ represent the information shared by one or two source variables with the target variable, $\operatorname{Syn}$ represents the information provided to the target variable by the ``cooperation'' of source variables. It is an accepted aphorism 
that cooperation does not necessarily increase outcome, and hence it might be the case that negative values of~$\operatorname{Syn}$ conform with intuition. However, the reason why this explanation is no longer necessary in a closed system, as well as alternative interpretations of~$\operatorname{Syn}$ that are nonnegative, remain to be studied.

\noindent\textbf{Acknowledgments.} The authors would like to thank an anonymous reviewer whose suggestions greatly simplified the paper. This research was supported by AFOSR grants FA9550-22-1-0054 and FA9550-23-1-0208.

\IEEEtriggeratref{15}


\newpage
\bibliographystyle{unsrt}
\bibliography{references}




\ifthenelse{\boolean{IncludeAppendix}}{

\newpage
\appendix


\subsection{Proof of the completeness of Definition~\ref{def:do operation}.}
In this part, we will show that do-operation's output is a probability distribution with the same marginal distribution as its input.
\begin{lemma}
\label{lem:support of definition do} 
For $\mathcal{D}_{X,Z}$ and $\mathcal{D}_C$ as in Definition~\ref{def:do operation}, the output $\Pr(A, C = x, z)$ of~\eqref{equ:def of do operation} describes a probability distribution, i.e.,
\begin{align*}
0 \le \Pr(A=x,C=z) &\le 1,\mbox{ and}\\
\sum_{x,z\in \mathcal{X}\times\mathcal{Z}}\Pr(A=x,C=z) &= 1.
\end{align*}


Furthermore, the marginal distribution $\mathcal{D}_{C}$ of the output $\mathcal{D}_{A,C}$ is equal to the input (call it $\mathcal{D}_{C'}$ in this section), i.e.,.
\begin{align*}
\sum_{x\in \mathcal{X}} \Pr(A=x,C=z) = \Pr(C'=z).
\end{align*}
\end{lemma}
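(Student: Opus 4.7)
The plan is to verify each of the three claims directly from Definition~\ref{def:do operation}, using only the key hypothesis that the input distribution~$\mathcal{D}_C$ has identical support to the marginal~$\mathcal{D}_Z$ of~$\mathcal{D}_{X,Y,Z}$. The proof will split along the case distinction in~\eqref{equ:def of do operation}, and the support condition is what glues the two cases together.

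First I would handle nonnegativity and the upper bound of~$1$. The ``if'' branch of~\eqref{equ:def of do operation} obviously gives $0$. For the ``otherwise'' branch, $\Pr(X,Y,Z=x,y,z)$, $\Pr(C=z)$, and $\Pr(Z=z)$ are all nonnegative, so $\Pr(A,B,C=x,y,z)\ge 0$. For the upper bound, note that by marginalization $\Pr(X,Y,Z=x,y,z) \le \Pr(Z=z)$, so the ratio is at most~$1$, and multiplying by $\Pr(C=z)\le 1$ preserves this bound.

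Next I would verify that the total mass equals~$1$. Summing~\eqref{equ:def of do operation} over $x,y,z$, the $z$'s with $\Pr(Z=z)=0$ contribute nothing, so
\begin{align*}
\sum_{x,y,z}\Pr(A,B,C=x,y,z)
&= \sum_{z:\Pr(Z=z)>0}\frac{\Pr(C=z)}{\Pr(Z=z)}\sum_{x,y}\Pr(X,Y,Z=x,y,z)\\
&= \sum_{z:\Pr(Z=z)>0}\Pr(C=z).
\end{align*}
Here the support hypothesis on~$\mathcal{D}_C$ is essential: every $z$ with $\Pr(C=z)>0$ satisfies $\Pr(Z=z)>0$, so the above sum equals $\sum_z\Pr(C=z)=1$.

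Finally, for the marginal claim, I would fix~$z$ and compute $\sum_{x,y}\Pr(A,B,C=x,y,z)$ in the same two cases. When $\Pr(Z=z)>0$, the inner sum over $(x,y)$ of $\Pr(X,Y,Z=x,y,z)$ collapses to $\Pr(Z=z)$, cancelling the denominator and leaving $\Pr(C=z)$. When $\Pr(Z=z)=0$, every term in the sum is $0$ by definition, and the support hypothesis forces $\Pr(C=z)=0$ as well, so both sides agree. There is no real obstacle here; the only subtle point worth flagging is the careful use of the support hypothesis, without which the sum-to-one property would fail and the construction would not be a probability distribution.
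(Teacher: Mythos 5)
Your proof is correct and follows essentially the same route as the paper: rewrite the ratio as a conditional probability (equivalently, bound the joint by the marginal), collapse the sums over $x,y$, and invoke the identical-support hypothesis to equate $\sum_{z:\Pr(Z=z)>0}\Pr(C=z)$ with $1$. If anything, you are more explicit than the paper about the degenerate case $\Pr(Z=z)=0$ (both for total mass and for the marginal claim), which the paper's write-up leaves implicit.
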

\label{proof:def do operation}


\begin{proof}
We begin by showing $0 \le \Pr(A,C=(x,z)) \le 1$. By Definition \ref{def:do operation},
\begin{align}
\label{equ:Support of Definition_1}
&\Pr(A=x,C=z) \nonumber \\
&= \Pr(X=x|Z=z) \Pr(C'=z).
\end{align}

Since both terms in~\eqref{equ:Support of Definition_1} are between~$0$ and~$1$, so is~$\Pr(A=x,C=z)$.

We continue by showing that
\begin{align}
\label{equ:Support of Definition_3}
&\sum_{x,z}\Pr(A=x,C=z) \nonumber\\
&\overset{\text{Def.~\ref{def:do operation}}}{=} \sum_{x,z}\Pr(X=x|Z=z) \Pr(C'=z) \nonumber\\
&= \sum_{z\in \mathcal{Z}}\Pr(C'=z)\sum_{x\in \mathcal{X}}\Pr(X=x|Z=z) \nonumber\\
&= \sum_{z\in\mathcal{Z}}\Pr(C'=z)=1.
\end{align}
In~\eqref{equ:Support of Definition_3}, we can also conclude that $\sum_{x \in \mathcal{X}}\Pr(A=x,C=z)=\Pr(C'=z)$, which implies that the input~$\mathcal{D}_{C'}$ is equal to the marginal distribution $\mathcal{D}_{C}$ of the output~$\mathcal{D}_{A,C}$. 
\end{proof}

\subsection{Proof of the equivalence of Definitions \ref{definition:un} and \ref{definition:un2}.}\label{proof:equavelence of definitions}

\begin{lemma}
Definition~\ref{definition:un} and Definition~\ref{definition:un2} are equivalent.
\end{lemma}
\begin{proof}
For the purpose of the proof, let
\begin{align}
\label{equ:equavelence of definitions_1}
\operatorname{Un}(X \to Z | Y) 
&= \sum_{y\in \mathcal{Y}} \Pr(Y=y) I(A_y;C_y), \text{ and}\\
\operatorname{Un}'(X \to Z | Y) 
&= I(X';Z|Y).
\end{align}
where~$\mathcal{D}_{C_y}=\mathcal{D}_{Z|Y=y}$. By Definition~\ref{def:do operation} we have
$\mathcal{D}_{A_y,C_y}=do(\mathcal{D}_{X,Z}|\mathcal{D}_{C_y})$, i.e.,
\begin{align*}
\Pr(A_y=x,C_y=z) &=
\Pr(X=x|Z=z) \Pr(Z=z|Y=y)\\
&\overset{\eqref{equation:X'Z}}{=}\Pr(X'=x,Z=z|Y=y).
\end{align*}
Therefore, we have
\begin{align*}
\eqref{equ:equavelence of definitions_1}&=\sum_{y\in \mathcal{Y}} \Pr(Y=y) I(X',Z|Y=y) \\
&=I(X',Z|Y) = \operatorname{Un}'(X \to Z | Y). \qedhere
\end{align*}
\end{proof}

\subsection{Proof of Lemma \ref{le:invariant property of channel}.}
\label{proof:def conditional entropy}
\begin{proof}
By Definition \ref{definition:un}, for every~$x,y,z\in\mathcal{X}\times\mathcal{Y}\times\mathcal{Z}$, we have 
\begin{align}\label{equ:proof definition:un_1}
&\Pr(X'=x,Z=z|Y=y)=\nonumber\\
&=\Pr(X=x|Z=z)\Pr(Z=z|Y=y),
\end{align}
and therefore
\begin{align}
\frac{\Pr(X'=x,Z=z|Y=y)}{\Pr(Z=z|Y=y)}=\Pr(X=x|Z=z),\nonumber
\end{align}
which can be simplified to
\begin{align}
\Pr(X'=x|Z=z,Y=y)=\Pr(X=x|Z=z).\nonumber
\end{align}
Therefore, we have that $H(X'|Z=z,Y=y)=H(X|Z=z)$ for every~$y,z\in\mathcal{Y}\times\mathcal{Z}$, which implies that $H(X'|Z,Y)=H(X|Z)$.

Also, from~\eqref{equ:proof definition:un_1} we have
\begin{align}
&\Pr(X'=x,Z=z, Y=y) \nonumber \\
&=\Pr(X=x|Z=z)\Pr(Z=z,Y=y),\nonumber 
\end{align}
and summation over all~$y\in\mathcal{Y}$ yields
\begin{align}
\label{equ:proof definition:un_2}
\Pr(X'=x,Z=z) 
&=\Pr(X=x|Z=z)\Pr(Z=z)\nonumber \\
&=\Pr(X=x,Z=z),
\end{align}
which readily implies that $H(X'|Z=z)=H(X|Z=z)$ for every~$z\in\mathcal{Z}$, and hence~$H(X'|Z)=H(X|Z)$. 
Also, by~\eqref{equ:proof definition:un_2}, we have
\begin{align*}
\Pr(X'=x)&=\sum_{z \in \mathcal{Z}}\Pr(X'=x,Z=z) \nonumber \\
&=\sum_{z \in \mathcal{Z}}\Pr(X=x,Z=z) \nonumber \\
&=\Pr(X=x),
\end{align*}
which implies that $H(X')=H(X)$.
\end{proof}

\subsection{Proof of Lemma \ref{le:redefinition of redundant information}.}
\label{proof: redefinition of redundant information}
\begin{proof}
By Definition \ref{definition:red}, we have:
\begin{align}
\label{equ:Commutativity of Redundant Information_1}
\operatorname{Red}&(X,Y\to Z) = I(X,Z) - \operatorname{Un}(X \to Z | Y)\nonumber\\
&=(I(X,Z) + H(X|Z)) - (\operatorname{Un}(X \to Z | Y) + H(X|Z))\nonumber\\
&=H(X) - (\operatorname{Un}(X \to Z | Y) + H(X|Z))
\end{align}
Since Corollary~\ref{corollary:sumH-H=U} states that
\begin{align*}
\operatorname{Un}(X \to Z | Y)
&= H(X'|Y) - H(X|Z),
\end{align*}
it follows that
\begin{align*}
\eqref{equ:Commutativity of Redundant Information_1}&=H(X) - H(X'|Y)\\
&\overset{\text{Lem.}~\ref{le:invariant property of channel}}{=}H(X') - H(X'|Y)=I(X';Y).\qedhere
\end{align*}


\end{proof}

\subsection{Proof of Lemma \ref{lem: Commutativity}.}
\label{proof:Commutativity of Redundant Information}
\begin{proof}
To prove that $I(X';Y) = I(X;Y')$, it suffices to show that $\Pr(X'=x,Y=y,Z=z)=\Pr(X=x,Y'=y,Z=z)$ for all~$x,y,z\in\mathcal{X}\times\mathcal{Y}\times\mathcal{Z}$. 
We have 
\begin{align}
\label{equ: Commutativity of Redundant Information_1}
&\Pr(X'=x,Y=y,Z=z) \nonumber \\
&=\Pr(X'=x,Z=z|Y=y) \Pr(Y=y)\nonumber\\
&\overset{\eqref{equation:X'Z}}{=}
\Pr(X=x|Z=z)\Pr(Z=z|Y=y) \Pr(Y=y)\nonumber\\
&=\Pr(X=x|Z=z)\Pr(Z=z,Y=y) \nonumber \\
&=\Pr(X=x|Z=z)\Pr(Y=y|Z=z)\Pr(Z=z),
\end{align}
in which~$X$ and~$Y$ are symmetric. The proof is concluded by following similar steps in reversed order, i.e.,
\begin{align*}
\eqref{equ: Commutativity of Redundant Information_1}
&=\Pr(X=x,Z=z)\Pr(Y=y|Z=z) \nonumber \\
&=\Pr(Z=z|X=x)\Pr(Y=y|Z=z)\Pr(X=x) \nonumber \\
&\overset{\eqref{equation:Y'Z}}{=}\Pr(Y'=y,Z=z|X=x)\Pr(X=x) \nonumber \\
&=\Pr(X=x,Y'=y,Z=z).\qedhere
\end{align*}
\end{proof}

\subsection{Proof of Lemma \ref{le:un<I}.}
\label{proof:un<I}
\begin{proof}
    By Corollary~\ref{corollary:sumH-H=U}, Lemma~\ref{le:un<I} is equivalent to
\begin{align*}
H(X'|Y) - H(X|Z) &\le I(X,Z)\nonumber\\
 H(X'|Y) &\le H(X),
\end{align*}
which coincides with Corollary~\ref{cor:sumH(A)<H(X)}, and the proof follows.
\end{proof}

\subsection{Proof of Lemma \ref{le:Additivity of Unique Information}.}
\label{proof:Additivity of Unique Information}
\begin{proof}
By Definition \ref{definition:un}, we have:
\begin{align}
\label{equ:proof of superposition_1}
&\operatorname{Un}((X,\bar{X}) \to (Z,\bar{Z}) | (Y,\bar{Y})) \nonumber \\
&=  I((X,\bar{X})' ; (Z,\bar{Z}) | (Y,\bar{Y}))
\end{align}
where 
\begin{align}
\label{equ:proof of superposition_2}
&\Pr((X,\bar{X})',(Z,\bar{Z})=(x,\bar{x}),(z,\bar{z})|(Y,\bar{Y})=(y,\bar{y}))\nonumber \\
&=\Pr((X,\bar{X})=(x,\bar{x})|(Z,\bar{Z})=(z,\bar{z}))\nonumber \\
&\phantom{=}\cdot \Pr((Z,\bar{Z})=(z,\bar{z})|(Y,\bar{Y})=(y,\bar{y})).
\end{align}
Since $X,Y,Z$ are independent from $\bar{X},\bar{Y},\bar{Z}$, we have 
\begin{align}
\eqref{equ:proof of superposition_2} 
&=\Pr(X=x|Z=z)\Pr(\bar{X}=\bar{x}|\bar{Z}=\bar{z})\nonumber \\
&\phantom{=}\cdot \Pr(Z=z|Y=y) \Pr(\bar{Z}=\bar{z}|\bar{Y}=\bar{y})\nonumber \\
&\overset{\eqref{equation:X'Z}}{=}\Pr(X'=x,Z=z|Y=y) \Pr(\bar{X}',\bar{Z}=\bar{x},\bar{z}|\bar{Y}=\bar{y}), \nonumber
\end{align}
which implies that $X',Z|Y=y$ and $\bar{X}',\bar{Z}|\bar{Y}=\bar{y}$ are also independent for every~$y,\bar{y}\in\mathcal{Y}\times \bar{\mathcal{Y}}$. Therefore, we have 
\begin{align*}
&=  I(X',Z|Y) + I(\bar{X}',\bar{Z}|\bar{Y}) \nonumber \\
&= \operatorname{Un}(X'\to Z|Y) + \operatorname{Un}(\bar{X}'\to \bar{Z}|\bar{Y})\qedhere
\end{align*}

\end{proof}

\subsection{Proof of Lemma \ref{le:continuity of do operation}.}
\label{proof:continuity of do operation}

\begin{proof}
Recall that Lemma~\ref{le:redefinition of redundant information} states that $\operatorname{Red}(X,Y\to Z)=I(X';Y)$. Therefore, since $I(X';Y)$ is a continuous function of $\mathcal{D}_{X',Y}$, it suffices to prove that 
the mapping~$\mathcal{F}(\mathcal{D}_{X,Y,Z})=\mathcal{D}_{X',Y}$ is continuous, which holds by~\eqref{equation:X'Z}.
\end{proof}

\subsection{Proof of Lemma \ref{le:Independent Identity}.}
\label{proof:Independent Identity}
\begin{proof}
By Def.~\ref{definition:un}, we have~$\operatorname{Un}(X \to Z | Y) = I(X';Z|Y)$, where 
\begin{align}
\label{equ:proof:Independent Identity_1}
&\Pr(X'=x,Z=z|Y=y) \nonumber \\
&=\Pr(X=x|Z=z)\Pr(Z=z|Y=y).
\end{align}
Since~$Z=(X,Y)$, Eq.~\eqref{equ:proof:Independent Identity_1} is zero whenever~$z\ne (x,y)$, and otherwise
\begin{align}
\label{equ:proof:Independent Identity_2}
\eqref{equ:proof:Independent Identity_1} &=\Pr(X=x|(X,Y)=(x,y))\Pr((X,Y)=(x,y)|Y=y)\nonumber \\
&=\Pr(X=x|Y=y).
\end{align}
Also, since $X$ and $Y$ are independent, we have 
\begin{align*}
\eqref{equ:proof:Independent Identity_2} = \Pr(X=x).
\end{align*}
Therefore, we have 
\begin{align*}
\operatorname{Un}(X \to Z | Y) = I(X';Z|Y) = H(X) = I(X;Z),
\end{align*}
which by Definition \ref{definition:red} implies that
\begin{align*}
\operatorname{Red}(X ,Y \to Z) &= I(X;Z) -  \operatorname{Un}(X \to Z | Y)=0.\qedhere
\end{align*}
\end{proof}

}{}
\end{document}